\numberwithin{equation}{section} 
\newtheorem{theorem}{\sc Theorem}
\newtheorem{coro}{\sc Corollary}
\newtheorem{req}{\sc Requirement}
\newtheorem{defin}{\sc Definition}
\newtheorem{rem}{\sc Remark}
\newtheorem{cla}{\sc Claim}
\newtheorem{ex}{\sc Example}
\begin{document}

\title{How incomputable is Kolmogorov complexity?}
\author{
Paul M.B. Vit\'{a}nyi
\thanks{
Paul Vit\'{a}nyi is with the national research center for mathematics
and computer science in the Netherlands (CWI),
and the University of Amsterdam.
Address:
CWI, Science Park 123,
1098XG Amsterdam, The Netherlands.
Email: {\tt Paul.Vitanyi@cwi.nl}.
}}

\maketitle
\begin{abstract}
Kolmogorov complexity is the length of the ultimately compressed version of a file (that is, anything which can be put in a computer). Formally, it is the length of a shortest program from which the file can be reconstructed.
We discuss the incomputability of Kolmogorov complexity, which formal loopholes
this leaves us, recent approaches to compute or approximate 
Kolmogorov complexity, which approaches are problematic 
and which approaches are viable.

{\em Index Terms}---
Kolmogorov complexity, incomputability, feasibility
\end{abstract}

\section{Introduction}
Recently there have been several proposals how to compute or approximate in 
some fashion the Kolmogorov complexity function. 
There is a proposal that is popular as a reference in papers that
do not care about theoretical niceties, and a couple of proposals that do 
make sense but are not readily applicable. Therefore it is timely to survey 
the field and show what is and what is not proven. 

The plain Kolmogorov complexity was defined
in \cite{Ko65} and denoted by $C$ in the text \cite{LV19} and 
its earlier editions.
It deals with finite binary strings, {\em strings} for short.
Other finite objects
can be encoded into single strings in natural ways.
The following notions and notation may not be familiar to the
reader so we briefly discuss them.
The length of a string $x$ is denoted by $l(x)$.
The {\em empty string} of 0
bits is denoted by $\epsilon$. Thus $l(\epsilon)=0$.
Let $x$
be a natural number or finite binary string according to the
correspondence
\[
( \epsilon , 0),  (0,1),  (1,2), (00,3), (01,4), 
(10,5), (11,6), \ldots .
\]
Then $l(x)= \lfloor \log (x + 1) \rfloor$.
The Kolmogorov complexity $C(x)$ of $x$ is the length of a shortest
string $x^*$ such that $x$ can be computed from $x^*$ by a
fixed universal Turing machine (of a special type called ``optimal''
to exclude undesirable such machines). In this way $C(x)$ is
a definite natural number associated with $x$ and
a lower bound on the length of a compressed version of it by any
known or as yet unknown compression algorithm. We also use the
conditional version $C(x|y)$.

The papers by R.J. Solomonoff published in 1964, referenced as \cite{So64}, 
contain informal  
suggestions about the
incomputability of Kolmogorov complexity.
Says Kolmogorov, ``I came to similar conclusions [as Solomonoff],
before becoming aware of Solomonoff's work, in 1963--1964.''
In his 1965 paper \cite{Ko65} Kolmogorov
mentioned the incomputability of $C(x)$ without giving a proof:
``[$\ldots$] the function $C_{\phi} (x|  y)$ cannot be effectively calculated
(generally computable) even if it is known to be finite for
all $x$ and $y$.'' We give the formal proof of incomputability and 
discuss recent attempts to compute the Kolmogorov complexity partially, 
a popular but problematic proposal and some serious options. 
The problems of the popular proposal are discussed 
at length while the serious options are primarily restricted to brief 
citations explaining the methods gleaned from the introductions to the articles 
involved. 

\section{Incomputability}
To find the shortest program (or rather its length) for a string $x$ 
we can run all programs to see which one halts with output $x$ and select 
the shortest. We need to consider only programs of length at most that of $x$
plus a fixed constant. 
The problem with this process is known as the {\em halting problem}
 \cite{Tu36}: 
some programs do not halt and it is undecidable which ones they are.
A further complication is that we have to show there are infinitely many
such strings $x$ for which $C(x)$ is incomputable.

The first written proof of the incomputability of Kolmogorov complexity 
was perhaps in \cite{ZL70} and we reproduce it here following \cite{LV19} 
in order to show 
what is and what is not proved.
\begin{theorem}
\label{C5}
The function $C(x)$ is not computable.
Moreover, no partial computable function $\phi (x)$
defined on an infinite set of points can coincide
with $C(x)$ over the whole of its domain of
definition.
\end{theorem}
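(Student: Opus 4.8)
The plan is to prove the stronger second assertion directly, since the non-computability of $C$ is the special case $\phi = C$ with domain all of the strings. So I would suppose, for contradiction, that $\phi$ is a partial computable function whose domain $D$ is infinite and which satisfies $\phi(x) = C(x)$ for every $x \in D$. The idea is to convert this assumption into a \emph{short} description of a string whose complexity is provably \emph{large} — a rigorous form of the Berry paradox (``the least integer not nameable in fewer than $k$ symbols'') — and then read off a numerical contradiction.

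First I would record the counting fact that drives everything: for each $m$ there are fewer than $2^m$ programs of length less than $m$, hence at most $2^m - 1$ strings $x$ with $C(x) < m$, so $\{x : C(x) \geq m\}$ is cofinite. Because $D$ is infinite, it meets this cofinite set, and therefore for every $m$ there is some $x \in D$ with $\phi(x) = C(x) \geq m$. This is the single point at which the hypothesis that $D$ is infinite is indispensable, and it is the step I expect to be the crux: it is exactly what guarantees that the search described next always terminates.

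Next I would define an algorithm $A$ that, on input $m$, dovetails the computations $\phi(x)$ over all strings $x$ in length-lexicographic order and outputs the first $x$ for which $\phi(x)$ halts with a value at least $m$. By the previous paragraph such an $x$ exists, so $A(m)$ is defined for every $m$, and $x_0 := A(m)$ lies in $D$; hence $C(x_0) = \phi(x_0) \geq m$. On the other hand, $x_0$ is produced from $m$ by the fixed algorithm $A$, so presenting the universal machine with a self-delimiting encoding of $m$ (of length at most $2\,l(m) + O(1)$) together with the program computing $A$ yields $C(x_0) \leq 2 \log m + O(1)$, where the constant absorbs the description of $A$ and is independent of $m$.

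Combining the two estimates gives $m \leq C(x_0) \leq 2 \log m + O(1)$, which fails for all sufficiently large $m$. This contradiction shows that no such $\phi$ can exist, establishing the second statement, and the first follows as the special case noted above. The only care needed beyond the termination point is the self-delimiting encoding of $m$, which is what keeps the description length logarithmic and so makes the final inequality genuinely impossible.
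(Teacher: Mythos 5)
Your proof is correct, and while it follows the same Berry-paradox blueprint as the paper---use the hypothetical $\phi$ to produce, from $m$, a string of complexity at least $m$ that nevertheless has a description of length $O(\log m)$---it differs in one genuine technical respect. The paper makes its search effective by first invoking the recursion-theoretic lemma that every infinite computably enumerable set contains an infinite computable subset: it selects such a subset $A$ of the domain of $\phi$ and defines the total computable function $\psi(m) = \min\{x : C(x) \geq m,\ x \in A\}$, which can be evaluated without halting worries because $\phi$ converges everywhere on $A$ and membership in $A$ is decidable. You instead dovetail $\phi$ over all strings and output the first witness with value at least $m$; termination is guaranteed by your counting argument that $\{x : C(x) \geq m\}$ is cofinite and hence meets the infinite domain. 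This removes the dependence on the infinite-computable-subset lemma entirely, making the argument more elementary and self-contained, at the harmless cost that $A(m)$ is the first witness discovered under the dovetailing schedule rather than the least one. You also spell out, via the count of at most $2^m - 1$ programs of length below $m$, the unboundedness step that the paper dismisses with ``obviously not bounded,'' which is a virtue. One minor observation: the self-delimiting encoding of $m$ is unnecessary for plain complexity---the machine computing $A$ can take the bare binary representation of $m$ as its input, giving $C(A(m)) \leq \log m + O(1)$ exactly as in the paper's bound $C(\psi(m)) \leq C_{\psi}(\psi(m)) + c_{\psi} \leq l(m) + c_{\psi}$---but your looser estimate $m \leq 2\log m + O(1)$ is just as false for large $m$, so the contradiction stands either way.
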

\begin{proof}
We prove that there is no partial computable $\phi$
as in the statement of the theorem. Every infinite
computably enumerable set contains an infinite
computable subset, see e.g. \cite{LV19}.
Select an infinite computable subset $A$ in the
domain of definition of $\phi$.
The function
$\psi (m) = \min  \{ x: C(x)   \geq   m, x  \in  A  \}  $ is (total) computable
(since $C(x) = \phi (x)$ on $A$), and takes arbitrarily large values,
since it can obviously not be bounded for infinitely many $x$. 
Also, by
definition of $\psi$, we have $C( \psi (m))   \geq   m$. On the other hand,
$C( \psi (m))  \leq C_{\psi} ( \psi (m)) + c_{\psi}$ by definition of $C$,
and obviously $C_{\psi} ( \psi (m))  \leq l( m)$.
Hence, $m  \leq \log m$ up to
a constant independent of $m$, which is false from some
$m$ onward.
\end{proof}

That was the bad news; the good news is
that we can approximate
$C(x)$.
\begin{theorem}
\label{C6}
There is a total computable function $\phi (t, x)$,
monotonic decreasing in $t$, such that
$\lim_{{t}  \rightarrow   \infty} \phi (t, x) = C(x)$.
\end{theorem}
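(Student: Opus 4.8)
The plan is to construct $\phi(t,x)$ by a straightforward dovetailing simulation that, at each time $t$, reports the best (shortest) program found so far for $x$ among all programs that have halted with output $x$ within the resources available. The key observation making this work is that $C(x)$ is \emph{upper semicomputable}: we can approximate it from above, getting ever-closer overestimates as we spend more time, even though (by Theorem~\ref{C5}) we can never certify we have reached the true value.

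First I would fix the optimal universal machine $U$ used to define $C$, and recall the standard bound $C(x) \leq l(x) + c$ for a fixed constant $c$, so that a shortest program for $x$ has length at most $l(x)+c$; there are only finitely many candidate programs to consider. Next I would define $\phi(t,x)$ operationally: run $U$ in a dovetailed fashion on all programs $p$ with $l(p) \leq l(x)+c$ for $t$ steps each, and set
\[
\phi(t,x) = \min\{\, l(p) : U(p) = x \text{ has halted within } t \text{ steps}\,\},
\]
with the convention that the initial value (before any such $p$ is found) is the trivial overestimate $l(x)+c$, which is always achievable since some program of that length outputs $x$. This makes $\phi$ total and computable: for every $(t,x)$ the computation terminates because we simulate only finitely many programs for only finitely many steps.

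Then I would verify the two required properties. Monotonicity in $t$ is immediate: as $t$ increases the set of halted programs only grows, so the minimum over their lengths can only decrease (or stay equal), giving $\phi(t+1,x) \leq \phi(t,x)$. For the limit, I would argue that a shortest program $x^*$ for $x$ has some fixed finite halting time $t_0$; for all $t \geq t_0$ the program $x^*$ is among the halted ones, so $\phi(t,x) \leq l(x^*) = C(x)$. Conversely $\phi(t,x) \geq C(x)$ always, since every halted program witnessing the minimum is a genuine program for $x$ and hence has length at least $C(x)$. Combining, $\phi(t,x) = C(x)$ for all $t \geq t_0$, so the monotonic nonincreasing sequence stabilizes at $C(x)$ and the limit equals $C(x)$.

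The only subtle point—and the place I would be most careful—is reconciling the effective approximation with Theorem~\ref{C5}: since we can compute the decreasing approximations $\phi(t,x)$ but cannot decide when they have reached their final value, there is no contradiction with incomputability. The convergence is genuine but noneffective, in the sense that no computable bound on the stabilization time $t_0$ exists (such a bound would make $C$ computable, contradicting Theorem~\ref{C5}). I expect the writing, rather than any mathematical obstacle, to be the main task here; the construction itself is routine dovetailing, and the essential content is simply recognizing that $C$ is approximable from above but not from below.
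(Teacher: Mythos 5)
Your construction is exactly the paper's: simulate $U$ for $t$ steps on all programs of length at most $l(x)+c$, output the length of the shortest halted program producing $x$ (default $l(x)+c$), and note that monotonicity is immediate and the limit is attained once a shortest program's halting time is exceeded. The proposal is correct and takes essentially the same approach as the paper, differing only in spelling out routine verifications (e.g.\ $\phi(t,x)\geq C(x)$ and the non-effectiveness of the stabilization time) that the paper leaves implicit or defers to the discussion following the theorem.
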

\begin{proof}
We define $\phi (t, x)$ as follows: For each $x$,
we know that the shortest program for $x$ has
length at most $l(x) + c$ with $c$ a constant independent of $x$.
Run the reference Turing machine $U$ (an optimal universal one)
for $t$ steps on %
\it each %
\rm program $p$ of length at most $l(x) +c$. If
for any such input $p$ the computation halts with
output $x$, then define the value of $\phi (t, x)$
as the length of the shortest such $p$,
otherwise equal to $l(x) + c$.
Clearly, $\phi (t, x)$
is computable, total, and monotonically nonincreasing
with $t$ (for all $x$, $\phi (t', x)  \leq \phi (t, x)$
if $t'   >   t$).
The limit exists, since for
each $x$ there exists a $t$ such that $U$
halts with output $x$ after computing $t$ steps starting
with input $p$ with $l(p) = C(x)$.
\end{proof}

One cannot decide, given $x$ and $t$, whether
$\phi (t, x) = C(x)$.
Since $\phi (t, x )$ is nondecreasing
and goes to the limit $C(x)$ for $t \rightarrow \infty$,
if there were a decision procedure to test
$\phi (t, x) = C(x)$, given $x$ and $t$, then we could compute
$C(x)$. But above we showed that $C$ is not computable.

However this computable approximation has no convergence guaranties as we
show now.
Let $g_1 , g_2 ,  \ldots $ be a sequence
of functions. We call $f$ the %
\it limit %
\rm of this sequence if $f(x) = \lim_{{t}  \rightarrow   \infty} g_t (x)$
for all $x$. The limit is %
\it computably uniform 
\rm if for every rational $\epsilon    >   0$ there exists a $t( \epsilon )$,
where $t$ is a total computable function,
such that $|f(x) - g_{t( \epsilon)} (x)|  \leq \epsilon$, for all $x$.
Let the sequence of one-argument functions $\psi_1 , \psi_2 , \ldots $
be defined by
$\psi_t (x) = \phi (t, x)$, for each $t$ for all $x$.
Clearly, $C$ is the limit
of the sequence of $\psi$'s. However, by Theorem~\ref{C5},
the limit is
not computably uniform. In fact, by the well-known halting problem,
for each $\epsilon   >   0$ and $t   >   0$
there exist infinitely many $x$ such that
$|C(x) - \psi_t (x)|   >   \epsilon$. This means that
for each $\epsilon   >   0$,
for each $t$ there are many $x$'s such that
our estimate $\phi (t, x)$ overestimates $C(x)$
by an %
\it error %
\rm of at least $\epsilon$.

\section{Computing the Kolmogorov complexity}
The incomputability of $C(x)$ does not mean that we can not compute $C(x)$
for some $x$'s. For example, if for individual string $x$ 
we have $C(C(x)|x)= c$ for some constant $c$, 
then this means that there
is an algorithm of $c$ bits which computes $C(x)$ from $x$. 
We can express the incomputability
of $C(x)$ in terms of $C(C(x)| x)$, which measures
what we may call the %
``complexity
of the complexity function.''
Let $l(x)=n$. It is easy to prove 
the {\it upper bound} $C(C(x)| x))  \leq \log n + O(1)$. But it is quite
difficult to prove the lower bound \cite{Ga74}:
For each length $n$ there are strings $x$ of length $n$ such that
$$
C(C(x)| x)   \geq   \log n - \log \log n - O(1)
$$
or its improvement by a game-based proof in \cite{BS14}: 
For each length $n$ there
are strings $x$ of length $n$ such that
$$
C(C(x)| x)   \geq   \log n - O(1).
$$
This means that $x$ only marginally helps to
compute $C(x)$; most information in $C(x)$
is extra information related to the halting problem. 

One way to go about
computing the Kolmogorov complexity for a few small values is as follows.
For example, let $T_1, T_2, \ldots$ be an acceptable
enumeration  of Turing machines. 
Such an acceptable enumeration is a formal concept \cite[Exercise 1.7.6]{LV19}. 
Suppose we have a fixed reference optimal universal
Turing machine $U$ in this enumeration.
Let $U(i,p)$ simulate $T_i(p)$ for all indexes $i$ 
and (binary) programs $p$.

Run $T_i(p)$ for all 
$i$ and $p$ in the following manner. As long as $i$ is sufficiently small 
it is likely that $T_i(p) < \infty$ for all $p$ 
(the machine $T_i$ halts for every $p$). 
The Busy Beaver function $BB(n): {\cal N} \rightarrow {\cal N}$ was 
introduced in \cite{Ra62} and has as value 
the maximal running time of $n$-state Turing machines in quadruple format 
(see \cite{Ra62} or \cite{LV19} for details).
This function is incomputable and rises faster than any computable function
of $n$.
 
Reference \cite{Br83} supplies the
maximal running time for halting machines for all $i < 5$ and for $i <5$ it 
is decidable which machines halt. For $i \geq 5$ 
but still small there are heuristics \cite{MB90,Mi04,Ke09,Ha16}. 
A gigantic lower bound for all $i$ is given in \cite{Gr64}.
Using Turing machines
and programs with outcome the target string $x$ we can determine an upper bound 
on $C(x)$ for reference machine $U$ (by for each $T_i$ encoding $i$ in 
self-delimiting format). Note that   
there exists no computable lower bound function approximating $C(x)$ since $C$ is incomputable and 
upper semicomputable. Therefore it can not be lower semicomputable \cite{LV19}. 

For an approximation using small Turing machines we do not have 
to consider all programs. If $I$ is the set of indexes of the Turing machines
and $P$ is the set of halting (or what we consider halting) programs then
\begin{align*}
\{(i,p)\}_x =  \{(i,p): T_i(p)  =x \}  \setminus \{(i,p): T_i(p)=x & \wedge 
\exists_{i',p'} (T_{i'}(p')=x \\ &  \wedge |(i,p)|  \leq  \min \{|i'|,|p'|\} \}, 
\end{align*}
with $i,i' \in I, p,p' \in P$. Here we can use the computably invertible
Cantor pairing function \cite{Wi} which is 
 $f: {\cal N} \times {\cal N} \rightarrow {\cal N}$
defined by $f (a,b) = \frac{1}{2} (a+b)(a+b+1) +b$ so that each 
pair of natural numbers $(a,b)$ is mapped to a natural number $f(a,b)$  
and vice versa. Since the Cantor pairing function is invertible, it must be one-to-one and onto: $|(a,b)|=|a|+|b|$.
Here $\{(i,p)\}_x$ is the desired set of applicable halting 
programs computing $x$. That is, if either $|i'|$ or $|p'|$ 
is greater than some $|(i,p)|$ with $(i,p) \in \{(i,p)\}_x$ while 
$T_{i'} (p')=x$ then we can discard  the pair concerned 
from $\{(i,p)\}_x$.

\section{Problematic Use of the Coding Theorem}

Fix an optimal universal prefix Turing machine $U$. 
The {\em Universal distribution} (with respect to $U$) is  
${\bf m}(x) = \sum 2^{-l(p)}$  where $p$ is a program (without input) for $U$
that halts. The prefix
complexity $K(x)$ 
is with respect to the same machine $U$. 
The complexity $K(x)$ is similar to $C(x)$ but such that the set of strings 
for which the Turing machine concerned halts is prefix-free (no program is 
a proper prefix of any other program). This leads to a slightly larger 
complexity: $K(x) \geq C(x)$.
The Coding theorem \cite{Le74}  states 
$K(x)= -\log {\bf m}(x)+O(1)$. 
Since $-\log {\bf m}(x) < K(x)$ (the term $2^{-K(x)}$ contributes to the sum 
and $2l(x)+O(\log x)$ is also a program for $x$)
we know that the $O(1)$ term is greater than 0. 

In \cite{Ze11} it was proposed to compute the Kolmogorov 
complexity by experimentally approximating the Universal distribution and
using the Coding theorem. This idea was used in several articles and 
applications.
One of the last 
is \cite{SZDG14}. It contains errors or inaccuracies for example: 
``the shortest program'' instead of ``a shortest program,'' 
``universal Turing machine'' instead of ``optimal universal Turing machine'' 
and so on. Explanation: there can be more than one 
shortest program, and Turing machines 
can be universal in many ways. For instance, if $U(p)=x$ for a universal
Turing machine, the Turing machine $U'$ such that $U'(qq)=U(q)$ for every 
$q$ and $U'(r)=0$ for every string $r \neq qq$ for some string $q$, is also
universal. Yet if $U$ serves to define the Kolmogorov complexity $C(x)$ then
$U'$ defines a complexity of $x$ equal to $2 C(x)$ which means that 
the invariance theorem does not hold for Universal Turing machines that are 
not optimal.

Let us assume that the computer used in the experiments fills 
the r\^ole of the required optimal Universal Turing machine for 
the desired Kolmogorov complexity, the target string, and the universal 
distribution involved. However, 
the $O(1)$ term in the Coding theorem is mentioned but otherwise ignored in the 
experiments and conclusions about the value of the Kolmogorov complexity 
as reported in \cite{Ze11,SZDG14}. Yet the experiments only concern 
small values of the Kolmogorov complexity, say smaller than 20, 
so they are likely swamped by the constant hidden in the $O(1)$ term. 
Let us expand on this issue briefly. In the proof of the Coding theorem, 
see e.g. \cite{LV19}, a Turing machine $T$ is used to decode a complicated code.
The machine $T$ is one of an acceptable enumeration $T_1, T_2, \ldots$ of all
Turing machines.
The target Kolmogorov complexity $K$ is shown to be smaller than the   
complexity $K_T$ associated with $T$ plus a 
constant $c$ representing 
the number of bits to represent $T$ and other items: $K(x) \leq K_T(x)+c$.
Since $T$ is complex since it serves to decode this code, 
the constant $c$ is huge, that is,
much larger than, say, $100$ bits.
The values of $x$ for which $K(x)$ is approximated by \cite{Ze11,SZDG14} 
are at most $5$ bits, that is, at most $32$. Unless there arises
a way to prove the Coding theorem without the large constant $c$, this method
is does not seem to work. Other problems: The distribution ${\bf m}(x)$ is 
apparently used as 
${\bf m}(x)= \sum_{i \in {\cal N}, T_i(\epsilon)=x} 2^{-l(\epsilon)}/i$,  
see \cite[equation (6)]{STZ17}  
using  a  (noncomputable) enumeration of Turing machines 
$T_1,T_2, \ldots$ that halt on empty 
input $\epsilon$. 
Therefore $\sum_{x \in {\cal N}} {\bf m}(x)=
\sum_{i \in {\cal N}, T_i(\epsilon) < \infty} 2^{-l(\epsilon)}/i$ 
and with $l(\epsilon)=0$ we have $\sum_{x \in {\cal N}} {\bf m}(x) = \infty$
since  $\sum_{x \in {\cal N}} 1/x= \infty$.
By definition however  
$\sum_{x \in {\cal N}} {\bf m}(x) \leq 1$ : 
contradiction. 
It should be 
${\bf m}(x)= \sum_{i \in {\cal N}, T_i(p)=x} 2^{-l(p)-\alpha(i)}$ with 
$\sum_{i \in {\cal N}} \alpha(i) \leq 1$ as shown in 
\cite[pp. 270--271]{LV19}. 

\section{Natural Data}\label{sect.nat}
The Kolmogorov
complexity of a file is a lower bound on the length
of the ultimate compressed version of that
file. We can approximate the Kolmogorov complexities involved by
a real-world compressor. Since the Kolmogorov complexity is incomputable,
in the approximation we never know how close we are to it. However,
we assume in \cite{CV04} that the natural data we are 
dealing with contain no complicated
 mathematical
constructs like $\pi=3.1415 \ldots$ or Universal Turing machines, 
see \cite{Vi13}.
In fact, we assume that the natural data we are dealing with contains
primarily effective regularities that a good compressor finds. Under those
assumptions the Kolmogorov complexity of the object is not much smaller than
the length of the compressed version of the object.


\section{Safe Computations}
A formal analysis of the the intuitive idea in Section~\ref{sect.nat} 
was subsequently and independently given in \cite{BMRAA14}. 
From the abstract of \cite{{BMRAA14}}: 
``Kolmogorov complexity is an incomputable function. 
$\ldots$ By restricting the source 
of the data to a specific model class, we can construct a computable 
function to approximate it in a probabilistic sense: 
the probability that the error is greater than $k$ decays exponentially 
with $k$.'' This analysis is carried out but its application yielding concrete
model classes is not. 

\section{Short lists}
Quoting from \cite{Zi14}: ``Given that the Kolmogorov complexity is not 
computable, it is natural to ask
if given a string $x$ it is possible to construct a short list 
containing a minimal (plus possibly a small overhead) description of $x$. 
Bauwens, Mahklin, Vereshchagin and Zimand \cite{BMVZ13}
and Teutsch \cite{Te12} show that, surprisingly, the answer is YES. Even more,
in fact the short list can be computed in polynomial time. More precisely, the
first reference showed that one can effectively compute lists of quadratic size guaranteed to
contain a description of $x$ whose size is additively $O(1)$ from a minimal one (it is
also shown that it is impossible to have such lists shorter than quadratic), and
that one can compute in polynomial-time lists guaranteed to contain a description that is additively $O(\log n)$ from minimal. Finally, \cite{Te12} improved the latter
result by reducing $O(\log n)$ to $O(1)$''. See also \cite{TZ16}.

\section{Conclusion}
The review shows that the Kolmogorov complexity of a string 
is incomputable in general, but
maybe computable for some arguments. To compute or approximate the Kolmogorov 
complexity recently several approaches have been proposed. The most popular 
of these is inspired by L.A. Levin's Coding theorem and consists in 
taking the negative logarithm of the so-called universal
probability of the string to abtain the Kolmogorov complexity of very short 
strings (this is not excluded by incomputability as we saw). 
This probability is approximated 
by the frequency distributions obtained
from small Turing machines. As currently stated the  approach is problematic 
in the sense that it is only suggestive and can not be proved correct. 
Nonetheless, some applications 
make use of it. 
Proper approaches either restrict the domain of strings  of 
which the Kolmogorov complexity is desired (so that the incomputability 
turns into computability) or manage to restrict the Kolmogorov complexity 
of a string to an item in a small list of options (so that the Kolmogorov
complexity has a certain finite probability).

\bibliographystyle{plain}

\end{document}